\documentclass[pra,aps,a4paper,showpacs,superscriptaddress,floatfix,twocolumn]{revtex4-1}
\usepackage{amsmath}
\usepackage{amsthm}
\usepackage{amsfonts}
\usepackage{amssymb}
\usepackage{graphicx}
\usepackage{dcolumn}
\usepackage{bm}

\newtheorem{Lemma}{Lemma}
\newtheorem{Proposition}{Proposition}

\DeclareMathOperator{\Span}{span}
\DeclareMathOperator{\sgn}{sgn}

\begin{document}

\title{Optimal Slater-determinant approximation of fermionic wave functions}

\author{J.~M.~Zhang}
\email{wdlang06@163.com}
\affiliation{Fujian Provincial Key Laboratory of Quantum Manipulation and New Energy Materials,
College of Physics and Energy, Fujian Normal University, Fuzhou 350007, China}
\affiliation{Fujian Provincial Collaborative Innovation Center for Optoelectronic Semiconductors and Efficient Devices, Xiamen, 361005, China}

\author{Norbert J. Mauser}
\email{mauser@courant.nyu.edu}
\affiliation{Wolfgang Pauli Institute c/o Fak. f. Mathematik,
Univ.~Wien, Oskar-Morgenstern Platz 1, 1090 Vienna, Austria}

\begin{abstract}
We study the optimal Slater-determinant approximation of an $N$-fermion wave function analytically. That is, we seek the Slater-determinant (constructed out of $N$ orthonormal single-particle orbitals) wave function having largest overlap with a given $N$-fermion wave function. Some simple lemmas have been established and their usefulness is demonstrated on some structured states, such as the Greenberger-Horne-Zeilinger state. In the simplest nontrivial case of three fermions in six orbitals, which the celebrated Borland-Dennis discovery is about, the optimal Slater approximation wave function is proven to be built out of the natural orbitals in an interesting way. We also show that the Hadamard inequality is useful for finding the optimal Slater approximation of some special target wave functions.
\end{abstract}

\pacs{03.67.Mn, 31.15.ve}
\maketitle

\section{Introduction}
Fermionic wave functions have deep structures. A commonplace knowledge as a consequence of the antisymmetry condition \cite{dirac,heisenberg} is the Pauli exclusion principle \cite{pauli} (though historically, the two were developed in the reversed order). However, less appreciated is that the antisymmetry condition can have much deeper consequences. In this regard, we have the celebrated discovery by Borland and Dennis \cite{borland1, borland2, ruskai}: For a wave function with three fermions in six orbitals, the occupation numbers $\lambda_i$ in the six natural orbitals \cite{lowdin}, ordered in the descending order, satisfy
\begin{subequations}\label{borlandanddennis}
\begin{eqnarray}
\lambda_1 + \lambda_6  = \lambda_2 + \lambda_5 =\lambda_3 + \lambda_4 =1, \\
\lambda_5 + \lambda_6 \geq \lambda_4, \quad \quad \quad\quad  \quad
\end{eqnarray}
\end{subequations}
besides the common knowledge of $0\leq \lambda_i \leq 1 $.
It is remarkable that initially this discovery was made through numerical experiments---apparently, it is hard to suspect such relations analytically.
The fact that the $m$-particle reduced density matrix of an $N$-fermion wave function must have nontrivial structures
had actually been perceived earlier by Coleman \cite{coleman}. This led him to propose the ``$N$-representability'' problem \cite{coleman, yukalov}, which turns out to be a very difficult one.
Only recently has the problem in the simplest case of $m=1$ been solved \cite{turkey},
generalizing the equalities and inequalities in (\ref{borlandanddennis}) systematically. This has led to a burst of studies of the relevance and implications of the so-called generalized Pauli constraints like (\ref{borlandanddennis}) in atoms, molecules, and model systems \cite{schilling13,schilling15, tennie15, spring13, spring15, spring14, chakraborty14, chakraborty15,chakraborty16, chakraborty152}. As for the next case of $m=2$ (which is possibly of more interest from the point of view of calculating the ground state energy of a multi-electron system), a systematic procedure for generating the $N$-representability conditions on the two-particle reduced density matrix has been derived by Mazziotti \cite{mazziotti12}. In this case, also, approximate $N$-representability conditions have been applied to a variety of systems \cite{mazziotti13}.

In this paper, we study the structure of an $N$-fermion wave function by looking for its optimal Slater approximation. The idea is to approximate an antisymmetric wave function with the simplest kind of wave function satisfying the antisymmetry condition \cite{slater}, so as to gain an idea of the complexity of the target wave function. Here by optimal \cite{hartreefock}, we mean the overlap between the target wave function and the Slater wave function is maximized.
In view of the fact that a Slater state is determined by $M=N$ orthonormal single-particle orbitals (or more appropriately, an $M=N$ dimensional subspace of the single-particle Hilbert space), the problem can be generalized to $M>N$. That is, we seek $M$ orthonormal single-particle orbitals (or an $M$-dimensional subspace) so that the projection of the wave function onto the space spanned by the  $C_M^N$ Slater determinants is maximized.
Apparently, this problem is about the geometry of a multifermion Hilbert space. It naturally provides a geometric measure \cite{wei,alex,alex2} of the entanglement between the fermions. Compared to those measures based on Schmidt decomposition or dividing a multipartite system into subsystems, it has the advantage of identical particles being treated identically, namely, on an equal footing \cite{ugo}.

The problem was proposed and studied previously in Ref.~\cite{zjm},
with the motivation of gauging the reliability of the multi-configuration time-dependent Hartree-Fock (MCTDHF) algorithm
\cite{zanghellini,caillat, kato, nest, alon, BardosGGM, BardosCMT}.
There, the approach was primarily numerical, and an efficient iterative algorithm was proposed.
On the analytic side, the  $N=2$ case turned out to be simple enough to allow for a complete solution.
The answer is simply that one should take the $M$ (if $M$ is even; otherwise, take $M-1$) most occupied natural orbitals.

We also note that the structure of a fermionic wave function has recently been extensively studied by Chen \textit{et al}.
\cite{lin2015,lin3in6}. Some of their results are of direct relevance to our problem.
Their approach has the merit of being more systematic and more general, but sometimes it is too sophisticated.

This paper is organized as follows. After stating the problem in Sec.~\ref{formulation},
we will establish in Sec.~\ref{lemmas} some simple lemmas.
These lemmas, although simple, enable us to find the optimal Slater approximation of some simply structured states effortlessly.
For example, for the state $f= (\sqrt{2}|123 \rangle +  |456 \rangle)/\sqrt{3}$,
we can immediately tell that its optimal Slater approximation is $|123\rangle $.
Then in Sec.~\ref{threeinsix}, we will turn from the general case to the particular case of three fermions in six orbitals,
i.e., we will focus on states $f\in \wedge^3 \mathcal{H}_6$.
This is the simplest nontrivial case both in Borland-Dennis's study and in ours.
We will prove the canonical form of the wave function, which was first proven by Chen \textit{et al}. \cite{lin3in6},
using a more elementary and more straightforward method.
With this canonical form, we find that the optimal Slater approximation has a simple structure in terms of the natural orbitals.
In Sec.~\ref{hadamardsection}, we show that the Hamadard inequality is useful for determining the optimal Slater approximation
for many structured states of interests. We summarize in Sec.~\ref{conclusion} and discuss some open questions.

%%%%%%%%%%%%%%%%%%%%%%%%%%%%%%%%%%%%%%%%%%%%%%%%%%%%%%%%%%%%%%%%%%%%%%%%%%%%%%%%%%%%%%%%%%%%%%%%%%%%%%%%%%%%%%%%%%%%%%%%%%%%
\section{Formulation of the problem}\label{formulation}
The problem has been formulated in Ref.~\cite{zjm} before.
Here we reformulate it in terms of exterior algebra, or Grassmann algebra, which is the most suitable language for fermions \cite{textbook1}.

For $N$ particles in a $d$-dimensional Hilbert space $ \mathcal{H}_d$,
the total Hilbert space is the tensor space $T^N (\mathcal{H}_d) \equiv \otimes^N \mathcal{H}_d $. The inner product on $T^N $ is defined by
\begin{eqnarray}
\langle \phi_1 \otimes \ldots \otimes \phi_N | \psi_1 \otimes \ldots \otimes \psi_N \rangle_T = \prod_{i=1}^N (\phi_i, \psi_i ),
\end{eqnarray}
and extended by linearity. Here $\phi_i$, $\psi_j \in \mathcal{H}_d$ and $(\cdot, \cdot )$ is the inner product on $\mathcal{H}_d$.

If the particles are identical fermions, the wave function must be an alternating, or antisymmetric, tensor. The effective Hilbert space is then the alternating subspace $A^N(\mathcal{H}_d)\equiv \wedge ^N\mathcal{H}_d$, which is spanned by tensors in the form of
\begin{eqnarray}
\phi_1 \wedge \ldots \wedge \phi_N  &\equiv & \mathcal{A}_N (\phi_1 \otimes \ldots \otimes \phi_N)  \nonumber \\
&=& \frac{1}{N!}\sum_{P\in S_N } \sgn(P) \phi_{P_1}\otimes \ldots \otimes \phi_{P_N}.
\end{eqnarray}
Here $\sgn(P)$ means the sign of the permutation $P$.
If $\{ e_1, e_2, \ldots, e_d \}$ is a linearly independent basis of $\mathcal{H}_d$, then $\{ e_{j_1} \wedge e_{j_2}\wedge \ldots \wedge e_{j_N} \}$ with $1\leq j_1 < j_2 \ldots < j_N \leq d $ is a linearly independent basis of $\wedge^N \mathcal{H}_d$.

In mathematical terms, an $N$-fermion wave function $f$ in $\wedge^N \mathcal{H}_d$ is called an $N$-vector. If there exist $N$ linearly independent vectors (or orbitals, in the physical term) $\{ \varphi_i | 1\leq i \leq N \}$ such that
\begin{eqnarray}
f = \varphi_1 \wedge \varphi_2 \wedge \ldots \wedge \varphi_N,
\end{eqnarray}
we call $f$ a decomposable $N$-vector. Physically, it is a Slater-determinant wave function constructed out of the orbitals $\{ \varphi_i | 1\leq i \leq N \}$.
A basic point is that a Slater determinant is determined by the subspace spanned by the orbitals, i.e., $\Span \{ \varphi_1, \ldots, \varphi_N \}$, up to a nonzero scalar factor. Actually, suppose $\{ \psi_i = \sum_{j=1}^N a_{ij} \varphi_j |1\leq i\leq N  \} $ is another linearly independent basis of $\Span \{ \varphi_1, \ldots, \varphi_N \}$, which means the transform matrix $a$ is nonsingular, then
\begin{eqnarray}
\psi_1 \wedge \ldots \wedge \psi_N = \det(a) \varphi_1 \wedge \ldots \wedge \varphi_N.
\end{eqnarray}

The inner product on $\wedge^N \mathcal{H}_d$ is defined as
\begin{eqnarray}
& & \langle \phi_1 \wedge \ldots \wedge \phi_N | \psi_1 \wedge \ldots \wedge \psi_N\rangle\nonumber \\
&\equiv & N! \langle \phi_1 \wedge \ldots \wedge \phi_N | \psi_1 \wedge \ldots \wedge \psi_N\rangle_T = \det [(\phi_i, \psi_j )],\quad
\end{eqnarray}
and extended by linearity. We easily see that two Slater determinants are orthogonal if and only if there is a nonzero vector in $V_{\phi}=\Span \{\phi_1, \ldots, \phi_N \}$ orthogonal to the space $V_{\psi}= \Span \{\psi_1, \ldots, \psi_N \} $, or equivalently, there is a nonzero vector in $V_\psi$ orthogonal to $V_\phi$.

For an arbitrary $N$-vector $f$ in $\wedge^N \mathcal{H}_d$, a subspace $V \subset \mathcal{H}_d$ is said to envelop $f$ if $f\in \wedge^N V$, with $\wedge^N V $ regarded naturally as a subspace of $\wedge^N \mathcal{H}_d$. In other words, $f$ is a linear combination of exterior products $\phi_1^{(i)} \wedge \ldots \wedge \phi_N^{(i)}$, with $\phi_k^{(i)} \in V$. It can be shown that there exists a minimal subspace $\mathcal{E}(f)$ which is a subspace of every $V$ enveloping $f$. Actually, $\mathcal{E}(f) $ is spanned by the natural orbitals of $f$ with nonzero occupation numbers. The dimensional of $\mathcal{E}(f)  $ is called the rank of $f$. In this term, an $N$-vector $f$ is decomposable if and only if its rank is $N$.

Now comes the problem. An $N$-fermion wave function $\Psi $ in $\wedge^N \mathcal{H}_d$, say the ground state of a system of interacting fermions,
is generally not a Slater determinant, or not a decomposable $N$-vector. However, a natural question out of the spirit of approximation is, is it possible to approximate the wave function with a Slater determinant to a good accuracy? The inner product on $\wedge^N \mathcal{H}_d$ provides a natural measure of distance
or proximity, without reference to any physical quantity. The quantity of interest is then
\begin{eqnarray}
I_\text{max}(\Psi;N) \equiv  \max_{V_\phi \subset \mathcal{H}_d } \left |\langle \phi_1 \wedge \phi_2 \wedge \ldots \wedge \phi_N | \Psi \rangle  \right|^2.
\end{eqnarray}
Here the maximum is taken over all $N$-dimensional subspaces $V_\phi $ of $\mathcal{H}_d$, with $\phi_1, \phi_2, \ldots, \phi_N $ being its orthonormal basis.
This is the simplest, single-configuration approximation, with the rank of the approximation $M$ equal to the number of fermions.

In the more general case, the rank of the approximation $M$ can be larger than $N$. The quantity of interest is then
\begin{eqnarray}
I_\text{max}(\Psi;M) &\equiv & \max_{V_\phi \subset \mathcal{H}_d } \sum_{J}\left |\langle \phi_{j_1} \wedge \phi_{j_2} \wedge \ldots \wedge \phi_{j_N} | \Psi \rangle  \right|^2. \quad
\end{eqnarray}
Here $J= (1\leq j_1 < j_2<\ldots < j_N\leq M )$ runs through all the different $N$-tuples. The maximum is taken over all $M$-dimensional subspaces $V_\phi $ of $\mathcal{H}_d$, with $\phi_1, \phi_2, \ldots, \phi_M $ being its orthonormal basis.

In this paper, we shall focus primarily  on the single-configuration case with $M=N$, but occasionally we also touch on the multi-configuration case.

%%%%%%%%%%%%%%%%%%%%%%%%%%%%%%%%%%%%%%%%%%%%%%%%%%%%%%%%%%%%%%%%%%%%%%%%%%%%%%%%%%%%%%%%%%%%%%%%%%%%%%%%%
\section{Useful Lemmas}\label{lemmas}
%
%\begin{Lemma}\label{nowaste}
%\end{Lemma}

First we formulate some lemmas (some of which were known previously) and give self-consistent proofs:

\begin{Lemma}\label{hole}
If $d= N+1$, then for an arbitrary normalized wave function $\Psi \in \wedge^N \mathcal{H}_d$,  $I_\text{max} (\Psi ; N ) = 1$.
That is, $\Psi$ can be written in the form of a Slater wave function.
\end{Lemma}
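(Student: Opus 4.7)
The plan is to exploit the coincidence $\dim \wedge^N \mathcal{H}_{N+1} = \binom{N+1}{N} = N+1 = \dim \mathcal{H}_{N+1}$: an $N$-vector in $N+1$ orbitals carries the same information as a single ``hole'' vector in $\mathcal{H}_{N+1}$, and the $N$-dimensional orthogonal complement of that hole will serve as the enveloping subspace. This justifies the name of the lemma and makes the result essentially a statement of particle-hole duality.

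Concretely, I would consider the linear map $L: \mathcal{H}_{N+1} \to \wedge^{N+1}\mathcal{H}_{N+1}$ given by $L(v) = v \wedge \Psi$. Since $\wedge^{N+1}\mathcal{H}_{N+1}$ is one-dimensional, $L$ has rank at most $1$, and by rank--nullity the kernel $V := \ker L$ has dimension at least $N$. To rule out $\dim V = N+1$, I would fix any orthonormal basis $\{e_1, \ldots, e_{N+1}\}$ and expand $\Psi = \sum_{k=1}^{N+1} c_k\, (-1)^{k-1}\, e_1 \wedge \cdots \wedge \widehat{e_k} \wedge \cdots \wedge e_{N+1}$; a direct sign count gives $e_k \wedge \Psi = c_k\, e_1 \wedge \cdots \wedge e_{N+1}$, so $L \equiv 0$ would force every $c_k$ to vanish, contradicting $\Psi \neq 0$. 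Hence $\dim V = N$ exactly.

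Next, I would show $\Psi \in \wedge^N V$. Pick any basis $\{v_1, \ldots, v_N\}$ of $V$, extend it to a basis $\{v_1, \ldots, v_{N+1}\}$ of $\mathcal{H}_{N+1}$, and expand $\Psi$ in the induced basis $\{v_1 \wedge \cdots \wedge \widehat{v_k} \wedge \cdots \wedge v_{N+1}\}_{k=1}^{N+1}$ of $\wedge^N \mathcal{H}_{N+1}$. The defining conditions $v_i \wedge \Psi = 0$ for $i = 1, \ldots, N$ kill every coefficient except the one corresponding to $k = N+1$, so $\Psi = \alpha\, v_1 \wedge \cdots \wedge v_N$ for some nonzero scalar $\alpha$. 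Applying Gram--Schmidt to $\{v_1, \ldots, v_N\}$ then yields an orthonormal basis $\{\phi_1, \ldots, \phi_N\}$ of $V$ with $\Psi = \alpha'\, \phi_1 \wedge \cdots \wedge \phi_N$; normalization of $\Psi$, together with the identity $\langle \phi_1 \wedge \cdots \wedge \phi_N \,|\, \phi_1 \wedge \cdots \wedge \phi_N \rangle = \det I = 1$, forces $|\alpha'| = 1$, and therefore $I_\text{max}(\Psi; N) = |\alpha'|^2 = 1$.

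There is no substantive obstacle here: the argument reduces to the single observation that $\dim \wedge^{N+1}\mathcal{H}_{N+1} = 1$. The only care needed is to confirm that $V$ is genuinely $N$-dimensional (not larger), so that it provides an orthonormal frame of exactly the right size for the maximization defining $I_\text{max}$.
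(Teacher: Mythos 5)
Your proof is correct, but it follows a genuinely different route from the paper's. The paper argues constructively: it first shows that the sum of \emph{two} Slater determinants in $\wedge^N\mathcal{H}_{N+1}$ is again a single Slater determinant, because the two $N$-dimensional supporting subspaces of $\mathcal{H}_{N+1}$ either coincide or intersect in an $(N-1)$-dimensional subspace $\Span\{\varphi_1,\dots,\varphi_{N-1}\}$, allowing the two terms to be factored as $\varphi_1\wedge\cdots\wedge\varphi_{N-1}\wedge(\tilde\phi_N+\tilde\psi_N)$; it then merges the at most $N+1$ terms of the expansion of $\Psi$ pairwise until one determinant remains. You instead exploit $\dim\wedge^{N+1}\mathcal{H}_{N+1}=1$ to build the annihilator subspace $V=\ker\bigl(v\mapsto v\wedge\Psi\bigr)$, verify $\dim V=N$ exactly, and show $\Psi$ is proportional to the top exterior power of $V$; all the steps (the sign count giving $e_k\wedge\Psi=c_k\,e_1\wedge\cdots\wedge e_{N+1}$, the elimination of all but one coefficient in the $v$-basis, and the normalization argument) check out. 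Your version is shorter and more intrinsic --- it identifies the supporting subspace directly as the ``hole'' complement and is essentially the standard annihilator criterion for decomposability of an $N$-vector specialized to corank one --- whereas the paper's merging argument is more elementary and hands-on, and the merging trick it introduces is reused in spirit elsewhere (e.g., in the argument that no $f_i\wedge f_j\wedge g_k$ term can appear in Eq.~(\ref{expan1})). Either proof is acceptable; yours generalizes cleanly to the statement that an $N$-vector is decomposable iff its annihilator has dimension $N$, while the paper's makes the $N$-step reduction explicit.
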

\begin{proof}
First we show that a wave function in the form of
\begin{eqnarray}\label{W1}
W = \phi_1 \wedge \phi_2 \wedge \ldots \wedge \phi_N + \psi_1 \wedge \psi_2 \wedge \ldots \wedge \psi_N,
\end{eqnarray}
where $\{ \phi_i | 1\leq i \leq N \}$ and $\{ \psi_i | 1\leq i \leq N \}$ are two sets of linearly independent vectors in $\mathcal{H}_d$, can be rewritten in the compact form of a single Slater determinant,
\begin{eqnarray}\label{W2}
W = \varphi_1 \wedge \varphi_2 \wedge \ldots \wedge \varphi_N.
\end{eqnarray}
Consider the union of the two $N$-dimensional spaces $V_\phi \equiv \Span\{\phi_1, \ldots, \phi_N\}$ and $V_\psi \equiv \Span\{\psi_1, \ldots, \psi_N\}$. If $\dim ( V_\phi \cup V_\psi) = N$, then $V_\phi = V_\psi$, and the two terms in (\ref{W1}) are the same up to a scalar factor and thus $W$ can be rewritten in the form of (\ref{W2}). If $\dim ( V_\phi \cup V_\psi) = N+1$, then $\dim (V_\phi \cap V_\psi )= N-1$. Let $V_\phi \cap V_\psi  = \Span \{ \varphi_1, \ldots, \varphi_{N-1} \}$. There exist $\tilde{\phi}_N$ and $\tilde{\psi}_N$ such that
\begin{eqnarray}
\phi_1 \wedge \phi_2 \wedge \ldots \wedge \phi_N = \varphi_1 \wedge \ldots \wedge \varphi_{N-1}\wedge \tilde{\phi}_N , \\
\psi_1 \wedge \psi_2 \wedge \ldots \wedge \psi_N = \varphi_1  \wedge\ldots \wedge \varphi_{N-1} \wedge \tilde{\psi}_N .
\end{eqnarray}
Therefore,
\begin{eqnarray}
W = \varphi_1 \wedge \ldots \wedge \varphi_{N-1}\wedge ( \tilde{\phi}_N  + \tilde{\psi}_N ),
\end{eqnarray}
which is in the form of (\ref{W2}).

Now an arbitrary state $\Psi \in \wedge^N \mathcal{H}_d$ can be expanded as
\begin{eqnarray}
\Psi = \sum_{i=1}^d  c_i e_1 \wedge \ldots \wedge \hat{ e}_i \wedge \ldots \wedge e_d ,
\end{eqnarray}
where $\hat{e}_i $ means $e_i$ is missing in the wedge product. Now by carrying out the process of merging two Slater determinants into one, we can reduce $\Psi$ to the form of a single Slater determinant in at most $N$ steps. Therefore, an arbitrary state $\Psi \in \wedge^N \mathcal{H}_d $ can always be reduced to a Slater-determinant form and thus $I_\text{max} (\Psi ; N ) = 1$.
\end{proof}

This lemma is already stated in Ando's paper \cite{ando}. In Ref.~\cite{zjm}, it was referred to as the $N$-in-$(N+1)$ theorem and several proofs were given. Here, yet another proof is given just for the sake of completeness. As emphasized in Ref.~\cite{zjm}, a consequence of Lemma \ref{hole} is for an arbitrary wave function $\Psi \in \wedge^N \mathcal{H}_d$, where $d$ is not constrained to $N+1$ but can be arbitrary, $I_\text{max}(\Psi ; N )= I_\text{max} (\Psi; N+1)$, since a $(N+1)$-rank approximation is actually an $N$-rank approximation by the lemma.

\begin{Lemma}\label{least}
If $d>N$, and $M = d -1$, that is, we consider dropping one orbital, then the orbital which should be dropped is the least occupied natural orbital, and $I_\text{max}(\Psi;d-1)= 1 - \lambda_d $, where $\lambda_d$ is the occupation of the least occupied natural orbital.
\end{Lemma}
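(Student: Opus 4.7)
The plan is to parametrize each $(d-1)$-dimensional subspace of $\mathcal{H}_d$ by the unit vector $\chi$ that is \emph{dropped}, so $V_\phi = \chi^\perp$, and thereby reduce the optimization to a one-variable Rayleigh-quotient problem for the one-particle reduced density matrix of $\Psi$. Choosing any orthonormal basis $\{\phi_1,\dots,\phi_{d-1}\}$ of $V_\phi$ and adjoining $\chi$ gives an orthonormal basis of $\mathcal{H}_d$, so the induced Slater basis of $\wedge^N \mathcal{H}_d$ splits according to whether $\chi$ is present or not, producing the orthogonal decomposition
\begin{equation*}
\Psi = \Psi_V + \chi \wedge \Phi, \qquad \Psi_V \in \wedge^N V_\phi,\ \Phi \in \wedge^{N-1} V_\phi.
\end{equation*}
From the determinantal inner product of Sec.~\ref{formulation} one reads off $\|\chi \wedge \Phi\|^2 = \|\Phi\|^2$ (because $\chi \perp V_\phi$) and $\Psi_V \perp \chi \wedge \Phi$, so Pythagoras gives $\|\Psi_V\|^2 + \|\Phi\|^2 = 1$. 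Since the sum of squared Slater overlaps in the definition of $I_\text{max}(\Psi;d-1)$ equals exactly $\|\Psi_V\|^2$, the problem becomes $I_\text{max}(\Psi;d-1) = 1 - \min_{\|\chi\|=1} \|\Phi\|^2$.

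The key step is to identify $\|\Phi\|^2$ with a diagonal matrix element of the one-particle reduced density operator $\rho^{(1)}$ of $\Psi$, normalized to $\mathrm{tr}\,\rho^{(1)} = N$ as in (\ref{borlandanddennis}). Expanding $\Psi = \sum_J c_J\, \phi_{j_1} \wedge \cdots \wedge \phi_{j_N}$ in the Slater basis associated with $\{\phi_1,\dots,\phi_{d-1},\chi\}$ and using its orthonormality, one gets $\|\Phi\|^2 = \sum_{J \ni \chi} |c_J|^2$, which is precisely the expected occupation $\langle \chi | \rho^{(1)} | \chi \rangle$ of the mode $\chi$.

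The Rayleigh characterization of the smallest eigenvalue of the positive operator $\rho^{(1)}$ then yields $\min_{\|\chi\|=1} \langle \chi | \rho^{(1)} | \chi \rangle = \lambda_d$, attained when $\chi$ is (any) least-occupied natural orbital. Substituting gives $I_\text{max}(\Psi;d-1) = 1 - \lambda_d$, with the optimal $V_\phi$ spanned by the $d-1$ most-occupied natural orbitals, as claimed. The only step requiring any care is the identity $\|\Phi\|^2 = \langle \chi | \rho^{(1)} | \chi \rangle$; but this is a routine bookkeeping check consistent with the wedge and inner-product normalizations fixed in Sec.~\ref{formulation}, and no serious technical difficulty is anticipated. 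As a sanity check, when $d=N+1$ Lemma~\ref{hole} forces $\lambda_{d}=0$, so the present result then collapses to $I_\text{max}=1$, consistent with that lemma.
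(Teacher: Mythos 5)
Your proposal is correct and follows essentially the same route as the paper: both reduce the problem to the identity that the total squared weight of Slater determinants containing the dropped orbital equals the diagonal matrix element $\langle \chi | \rho | \chi \rangle$ of the one-particle reduced density matrix, and then minimize this via the Rayleigh characterization of the smallest eigenvalue. The only difference is presentational (you phrase the optimization over the dropped vector $\chi$ explicitly, while the paper fixes a basis and identifies the dropped orbital afterward), so there is nothing substantive to add.
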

\begin{proof}
Let $\{ e_1, \ldots, e_d \}$ be an orthonormal basis of $\mathcal{H}_d$, in which $e_d$ will be dropped. The wave function can be expanded as
\begin{eqnarray}
\Psi = \sum_J C_J e_{j_1} \wedge e_{j_2} \wedge \ldots \wedge e_{j_N}.
\end{eqnarray}
Here $J$ denotes an ordered $N$-tuple $(1\leq j_1<j_2 \ldots < j_N \leq d )$. We have
\begin{eqnarray}
I (\Psi;d-1) &=& 1 - \sum_{J|_{d \in J}} |C_J|^2 = 1- \langle e_d | \rho | e_d \rangle ,
\end{eqnarray}
where $\rho$ is the one-particle reduced density matrix of $\Psi$. Apparently, to maximize $I$, we have to take $e_d$ as the eigenvector of $\rho$ corresponding to the smallest eigenvalue, or the least occupied natural orbital.
\end{proof}

This lemma is also known previously \cite{holland}.

\begin{Lemma}\label{factor2}
Let $\Psi$ be a normalized $N$-fermion wave function in which orbital $f$ is occupied with probability unity. That is, $\Psi = f \wedge \Psi' $, with $\Psi'$ being a $(N-1)$-fermion wave function and $f \perp \mathcal{E}(\Psi')$. Then
\begin{eqnarray}\label{factor3}
I_{\text{max}} (\Psi; N) = I_{\text{max}} (\Psi'; N-1).
\end{eqnarray}
Moreover, the optimal Slater approximation $S$ of $\Psi$ is $S= f \wedge S'$, with $S'$ being the optimal Slater approximation of $\Psi'$.
\end{Lemma}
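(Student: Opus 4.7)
The plan is to reduce the problem to a canonical form in which a single trial orbital carries the entire $f$-component of the Slater determinant, and then factor the resulting inner product using the hypothesis $f\perp\mathcal{E}(\Psi')$. Since $\phi_1\wedge\cdots\wedge\phi_N$ depends on the span $V_\phi$ only up to a nonzero scalar, replacing the orthonormal basis $\{\phi_i\}$ by $\tilde\phi_i=\sum_j U_{ij}\phi_j$ for any unitary $U$ changes the Slater state only by the phase $\det U$, so $|\langle\tilde\phi_1\wedge\cdots\wedge\tilde\phi_N|\Psi\rangle|^2$ is invariant under such rotations. I will decompose $\phi_j=\alpha_j f+\phi_j^\perp$ with $\phi_j^\perp\perp f$ and choose $U$ whose first row is proportional to $(\alpha_1^*,\ldots,\alpha_N^*)$, completing the remaining rows into a unitary matrix. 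After this rotation, $\tilde\phi_1=\alpha f+g_1$ with $\alpha=(\sum_j|\alpha_j|^2)^{1/2}$ and $g_1\perp f$, while $\tilde\phi_i=g_i\perp f$ for $i\ge 2$; the set $\{g_1,\ldots,g_N\}$ stays orthonormal and $|\alpha|^2+\|g_1\|^2=1$, so $|\alpha|\le 1$.

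Next I will expand
\[
\tilde\phi_1\wedge\cdots\wedge\tilde\phi_N=\alpha\,f\wedge g_2\wedge\cdots\wedge g_N+g_1\wedge g_2\wedge\cdots\wedge g_N
\]
and take the inner product with $\Psi=f\wedge\Psi'$. Expanding $\Psi'$ in an orthonormal basis $\{\psi_1,\ldots,\psi_r\}$ of $\mathcal{E}(\Psi')$ and using the defining formula $\langle\phi_1\wedge\cdots\wedge\phi_N|\psi_1\wedge\cdots\wedge\psi_N\rangle=\det[(\phi_i,\psi_j)]$, the second term vanishes term-by-term: the first column of its Gram matrix is $(\langle g_i,f\rangle)_i=0$. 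For the first term, the Gram matrix of $\{f,g_2,\ldots,g_N\}$ against $\{f,\psi_{k_1},\ldots,\psi_{k_{N-1}}\}$ has a $1$ in the $(1,1)$ entry, zeros in the rest of the first row (since $f\perp\mathcal{E}(\Psi')$) and zeros in the rest of the first column (since $g_i\perp f$ for $i\ge 2$), so cofactor expansion along the first row reduces it to the $(N-1)\times(N-1)$ Gram determinant. Extending by linearity yields
\[
\langle f\wedge g_2\wedge\cdots\wedge g_N|f\wedge\Psi'\rangle=\langle g_2\wedge\cdots\wedge g_N|\Psi'\rangle.
\]

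Combining these two observations I obtain
\[
|\langle\phi_1\wedge\cdots\wedge\phi_N|\Psi\rangle|^2=|\alpha|^2\,|\langle g_2\wedge\cdots\wedge g_N|\Psi'\rangle|^2\le|\alpha|^2\,I_\text{max}(\Psi';N-1)\le I_\text{max}(\Psi';N-1),
\]
where the first inequality uses that $\{g_2,\ldots,g_N\}$ is an orthonormal candidate for the $(N-1)$-rank approximation of $\Psi'$, and the second uses $|\alpha|\le 1$. Equality is attained by taking $|\alpha|=1$, so that $\tilde\phi_1=f$ up to a phase, together with $\{g_2,\ldots,g_N\}$ an optimal orthonormal basis for $\Psi'$, which may be chosen inside $\mathcal{E}(\Psi')\subset f^\perp$. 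This simultaneously yields (\ref{factor3}) and the structural claim $S=f\wedge S'$.

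The main technical obstacle is the middle step: verifying the two Gram-determinant identities under the hypothesis $f\perp\mathcal{E}(\Psi')$. Both reductions are routine cofactor expansions, but they are the only place where the perpendicularity hypothesis is genuinely used, and the factorization $I_\text{max}(\Psi;N)=I_\text{max}(\Psi';N-1)$ would fail without it. Apart from this calculation, the argument is essentially the unitary-rotation trick of the $U$-transformation combined with the very definition of $I_\text{max}$.
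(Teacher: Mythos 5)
Your proof is correct and follows essentially the same route as the paper's: you rotate the trial basis so that only one orbital carries the $f$-component (the paper instead picks $\phi_1\propto f_\parallel$ directly), factor the overlap as $|\alpha|^2|\langle g_2\wedge\cdots\wedge g_N|\Psi'\rangle|^2$ via the Gram determinant, and bound by $I_\text{max}(\Psi';N-1)$ with equality at $|\alpha|=1$. The only cosmetic difference is that you establish the upper bound for every trial subspace and then exhibit attainment, whereas the paper argues on the assumed optimum and closes with a reverse inequality.
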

\begin{proof}
For the optimal Slater approximation $S$, we can find an orthonormal basis $\{\phi_1, \phi_2, \ldots, \phi_N\}$ for its supporting space $V$ such that
\begin{eqnarray}\label{ortho1}
\langle f | \phi_i \rangle=0, \quad 2\leq i \leq N,
\end{eqnarray}
as follows. Decompose $f$ as $f= f_\parallel + f_\perp$ with $f_\parallel \in V$ and $f_\perp\in V^\perp $. If $f_\parallel \neq 0$, take $\phi_1 \propto f_\parallel$, while if $f_\parallel =0$, take $\phi_1$ arbitrarily. Basis vectors $\{\phi_2, \ldots, \phi_N\}$ can then be obtained by extending $\phi_1$ into a full orthonormal basis of $V$. It is readily seen that such chosen basis vectors satisfy condition (\ref{ortho1}).

We have then $S = (a f + \tilde{\phi}_1) \wedge \Phi$,
where $\phi_1 = a f + \tilde{\phi}_1$, $a = \langle f | \phi_1 \rangle $, and $\Phi = \phi_2 \wedge \ldots \wedge \phi_N$. Now
%\begin{eqnarray}\label{ineqn1}
%I_{\text{max}} (\Psi; N) = |\langle S | \Psi \rangle |^2 &=& |a \langle  \phi_2 \wedge \ldots \wedge \phi_N| \Psi'\rangle |^2 \nonumber \\
%& \leq & \langle  \phi_2 \wedge \ldots \wedge \phi_N |\Psi'\rangle |^2 \nonumber \\
%&\leq & I_{\text{max}} (\Psi'; N-1).
%\end{eqnarray}
\begin{eqnarray}\label{ineqn1}
I_{\text{max}} (\Psi; N) = |\langle S | \Psi \rangle |^2 &=& |a \langle \Phi | \Psi'\rangle |^2
 \leq  |\langle  \Phi |\Psi'\rangle |^2 \nonumber \\
&\leq & I_{\text{max}} (\Psi'; N-1).
\end{eqnarray}
On the other hand, if $\Psi'' = \psi_2 \wedge \ldots \wedge \psi_N$ is the optimal Slater approximation of $\Psi'$, then
\begin{eqnarray}\label{ineqn2}
I_{\text{max}} (\Psi'; N-1) = |\langle \Psi'' | \Psi' \rangle |^2 &=& |\langle f\wedge \Psi'' |f\wedge \Psi' \rangle |^2 \nonumber \\
&\leq & I_{\text{max}} (\Psi; N).
\end{eqnarray}
Equation (\ref{factor3}) follows (\ref{ineqn1}) and (\ref{ineqn2}).
Finally, to have the equalities in (\ref{ineqn1}) fulfilled, we must have $\phi_1 = f$ and $\Phi = S'$. That is, $S = f \wedge S'$.
\end{proof}
Lemma \ref{factor2} means that, if some orbital is always occupied, it must be taken so as to maximize the overlap. The fact $S = f\wedge S'$ allows us to factorize the  orbital $f$ out to reduce the $N$-fermion wave function to a $(N-1)$-fermion wave function.

\begin{Lemma}\label{branch2}
Let $\Psi$ be a normalized $N$-fermion wave function and let $f$ and $g$ be two orthogonal normalized orbitals. Suppose $\Psi$ can be decomposed as $\Psi = \Psi' + f \wedge g \wedge \Psi''$,
where $\Psi'$ is a $N$-fermion wave function and $\Psi''$ a $(N-2)$-fermion wave function, and $f$, $g \perp \mathcal{E}(\Psi') \oplus \mathcal{E}(\Psi'')$, then
\begin{eqnarray}\label{branch3}
I_{\text{max}} (\Psi; N) = \max \{ I_{\text{max}} (\Psi'; N),      I_{\text{max}} (\Psi''; N-2)     \}.\; \;\;
\end{eqnarray}
Moreover, if $ I_{\text{max}} (\Psi'; N) >    I_{\text{max}} (\Psi''; N-2)   $, then the optimal Slater approximation $S= S'$, where $S'$ is the optimal Slater approximation of $\Psi'$; while if  $ I_{\text{max}} (\Psi'; N) <   I_{\text{max}} (\Psi''; N-2)   $, then $S= f\wedge g \wedge S''$, where $S''$ is the optimal Slater approximation of $\Psi''$.
\end{Lemma}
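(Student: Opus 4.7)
The plan is to establish matching lower and upper bounds for $I_\text{max}(\Psi;N)$, then trace equality to identify the form of the optimum.

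For the lower bound I would test the two candidate Slater determinants $S'$ and $f\wedge g\wedge S''$. A ``project onto the natural-orbital space'' observation---if $P$ is projection onto $\mathcal{E}(\Psi')$, then $|\langle S'|\Psi'\rangle|=\|PS'\|\cdot|\langle PS'/\|PS'\||\Psi'\rangle|$, so at the optimum $\|PS'\|=1$, forcing $\mathcal{E}(S')\subseteq\mathcal{E}(\Psi')$---lets one assume $\mathcal{E}(S')\subseteq\mathcal{E}(\Psi')$ and $\mathcal{E}(S'')\subseteq\mathcal{E}(\Psi'')$. Since $f,g$ are orthogonal to both natural-orbital spaces, the two supporting subspaces become orthogonal, killing the cross terms (two Slater determinants whose orbitals sit in mutually orthogonal subspaces are orthogonal in $\wedge^N\mathcal{H}_d$). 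Two applications of Lemma \ref{factor2} to absorb $f$ and $g$ then give $|\langle S'|\Psi\rangle|^2=I_\text{max}(\Psi';N)$ and $|\langle f\wedge g\wedge S''|\Psi\rangle|^2=I_\text{max}(\Psi'';N-2)$, so $I_\text{max}(\Psi;N)\geq\max\{I_\text{max}(\Psi';N),I_\text{max}(\Psi'';N-2)\}$.

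For the upper bound, set $U_1=\mathcal{E}(\Psi')$ and $U_2=\Span(f,g)\oplus\mathcal{E}(\Psi'')$; these are orthogonal subspaces with $\Psi\in(\wedge^N U_1)\oplus(\wedge^N U_2)$. For an arbitrary normalized Slater $S=\phi_1\wedge\cdots\wedge\phi_N$, decompose each orbital $\phi_i=\alpha_i+\beta_i+\gamma_i$ with components in $U_1$, $U_2$ and $(U_1\oplus U_2)^\perp$. Expanding the wedge multilinearly, only the ``all-$\alpha$'' and ``all-$\beta$'' summands can pair with $\Psi$, giving
\[
\langle S|\Psi\rangle=\langle\alpha_1\wedge\cdots\wedge\alpha_N|\Psi'\rangle+\langle\beta_1\wedge\cdots\wedge\beta_N|f\wedge g\wedge\Psi''\rangle.
\]
By Gram--Schmidt, each decomposable $N$-vector on the right equals $\sqrt{\det A}$ (resp.\ $\sqrt{\det B}$) times a normalized Slater, where $A$, $B$ are the Gram matrices of the $\alpha$'s and $\beta$'s. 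Hence $|\langle S|\Psi\rangle|\leq\sqrt{\det A}\sqrt{p}+\sqrt{\det B}\sqrt{q}$, where $p=I_\text{max}(\Psi';N)$ and $q=I_\text{max}(\Psi'';N-2)$ (the latter via a double application of Lemma \ref{factor2}).

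The main obstacle is showing $\sqrt{\det A}+\sqrt{\det B}\leq 1$ for $N\geq 2$. Orthonormality of the $\phi_i$'s forces $A+B+C=I$ with $C\succeq 0$ the Gram of the $\gamma$'s. Diagonalizing $A$ with eigenvalues $a_i\in[0,1]$ gives $\det A=\prod a_i$; Hadamard's inequality applied to the positive-semidefinite $B$---whose diagonal entries satisfy $B_{ii}\leq 1-a_i$ in this basis---gives $\det B\leq\prod(1-a_i)$. Writing $a_i=\cos^2\theta_i$ reduces the claim to $\prod\cos\theta_i+\prod\sin\theta_i\leq 1$, which I would prove by induction on $N$: the base case $N=2$ is $\cos(\theta_1-\theta_2)\leq 1$, and the inductive step follows from Cauchy--Schwarz together with $\sqrt{a^2+b^2}\leq a+b$ for nonnegative $a,b$. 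This trigonometric inequality fails at $N=1$, so the presence of \emph{both} $f$ and $g$ is essential---exactly the hypothesis of the lemma. Combining, $|\langle S|\Psi\rangle|^2\leq(\sqrt{\det A}+\sqrt{\det B})^2\max(p,q)\leq\max(p,q)$.

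To identify the optimal Slater, I would trace equality through the chain. When $p>q$, equality forces $\det B=0$ and $\det A=1$; since $A+B+C=I$ with $A\preceq I$, this pins $A=I$ and then $B=C=0$, so $V_\phi\subseteq U_1$, and equality in the Slater-bound step makes $S$ an optimal Slater of $\Psi'$, i.e., $S=S'$. The case $q>p$ is symmetric, forcing $V_\phi\subseteq U_2$; since $f$ and $g$ are fully occupied in $f\wedge g\wedge\Psi''$, a double application of Lemma \ref{factor2} then pins down $S=f\wedge g\wedge S''$.
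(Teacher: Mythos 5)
Your argument has one genuine gap: the upper bound hinges on decomposing each orbital as $\phi_i=\alpha_i+\beta_i+\gamma_i$ with components in $U_1=\mathcal{E}(\Psi')$, $U_2=\Span(f,g)\oplus\mathcal{E}(\Psi'')$, and their joint orthogonal complement, which presupposes that $\mathcal{E}(\Psi')$ and $\mathcal{E}(\Psi'')$ are mutually orthogonal. The hypothesis only guarantees that $f$ and $g$ are orthogonal to both envelope spaces; it does not make those two spaces orthogonal to each other, and the paper applies the lemma precisely in such situations (in Example 3, $\Psi'=b|345\rangle+c|567\rangle$ and $\Psi''=a|3\rangle$, so $\mathcal{E}(\Psi'')\subset\mathcal{E}(\Psi')$ and your $U_1$, $U_2$ are not even disjoint). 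When $U_1\not\perp U_2$ the three-way decomposition is ill-defined, the clean splitting of $\langle S|\Psi\rangle$ into an all-$\alpha$ and an all-$\beta$ term fails, and the constraint $A+B+C=I$ on the Gram matrices---the engine of your $\sqrt{\det A}+\sqrt{\det B}\leq 1$ step---is lost. The paper avoids this by adapting the orthonormal basis of the support of $S$ only to the pair $f,g$ (so that merely $\phi_1$ and $\phi_2$ carry $f,g$ components, with $\phi_1=s_1f+s_2g+\tilde\phi_1$, $\phi_2=t_1g+\tilde\phi_2$), obtaining $\langle S|\Psi\rangle=\langle\tilde\phi_1\wedge\tilde\phi_2\wedge\Phi|\Psi'\rangle+s_1^*t_1^*\langle\Phi|\Psi''\rangle$ and closing with Cauchy--Schwarz on a two-term sum; nothing there requires $\mathcal{E}(\Psi')\perp\mathcal{E}(\Psi'')$.

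Everything else you do is sound, and in the special case where the two envelope spaces \emph{are} orthogonal your route is a correct and rather elegant alternative: the chain $\det B\leq\prod(1-a_i)$ via Hadamard, followed by $\prod\cos\theta_i+\prod\sin\theta_i\leq1$ for $N\geq2$, is a nice quantitative substitute for the paper's Cauchy--Schwarz step, and your equality tracing ($\det A=1$ with $A\preceq I$ forces $A=I$) is cleaner than the paper's. Your lower-bound paragraph is also fine as stated, since killing the cross terms only needs $f\perp\mathcal{E}(S')$ and $f\perp\mathcal{E}(\Psi')$, not orthogonality of the two envelope spaces. To repair the proof in full generality, you would need to replace the subspace pair by one adapted to $f$ and $g$ alone, which essentially reproduces the paper's construction.
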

\begin{proof}
Let $S$ be the optimal Slater approximation of $\Psi$. Similar to (\ref{ortho1}), we can find an orthonormal basis $\{\phi_1, \phi_2, \ldots, \phi_N\}$ for its supporting space $V$ such that
\begin{subequations}\label{ortho2}
\begin{eqnarray}
\langle f | \phi_i \rangle=0, & & \quad 2\leq i \leq N, \\
\langle g | \phi_i \rangle = 0, & & \quad 3 \leq i \leq N .
\end{eqnarray}
\end{subequations}
This can be done by first determining $\phi_1$ by the projection of $f$ onto $V$, and then determining $\phi_2$ to be the projection of $g$ onto the orthogonal supplement space of $\phi_1$ with respect to $V$.

 Expand $\phi_1$ and $\phi_2$ as $\phi_1 = s_1 f + s_2 g + \tilde{\phi}_1$, $ \phi_2 = t_1 g + \tilde{\phi}_2$,
with $s_1 = \langle f | \phi_1 \rangle $, $s_2 = \langle g | \phi_1 \rangle $, and $t_1 = \langle g | \phi_2 \rangle $. Apparently,
\begin{eqnarray}\label{norm}
|s_1|^2 + |s_2|^2 + \parallel \tilde{\phi}_1 \parallel^2 =1, \quad |t_1|^2+\parallel \tilde{\phi}_2 \parallel^2 =1 .
\end{eqnarray}
We have then $S = (s_1 f + s_2 g + \tilde{\phi}_1)\wedge (t_1 g + \tilde{\phi}_2)\wedge \Phi$,
with $\Phi = \phi_3 \wedge \ldots \wedge \phi_N $. We have
\begin{eqnarray}\label{fg}
I_{\text{max}} (\Psi; N) &=&  |\langle S | \Psi \rangle |^2 \nonumber \\
&=&  |\langle \tilde{\phi}_1 \wedge \tilde{\phi}_2 \wedge \Phi |\Psi'\rangle + s_1^* t_1^* \langle \Phi | \Psi'' \rangle |^2 \nonumber \\
&\leq & \mathcal{M} \big|\parallel \tilde{\phi}_1 \parallel\parallel \tilde{\phi}_2 \parallel+ |s_1||t_1| \big|^2 \nonumber \\
& \leq & \mathcal{M} (\parallel \tilde{\phi}_1 \parallel^2 + |s_1|^2)(\parallel \tilde{\phi}_2 \parallel^2 + |t_1|^2) \nonumber \\
& \leq & \mathcal{M}.
\end{eqnarray}
Here $\mathcal{M}\equiv  \max \{ I_{\text{max}} (\Psi'; N),      I_{\text{max}} (\Psi''; N-2)     \} $. In the fourth line, we used the Cauchy-Schwarz inequality, while in the fifth line, we used (\ref{norm}).
On the other hand, apparently, both $I_{\text{max}} (\Psi'; N)$ and $I_{\text{max}} (\Psi''; N-2)$ are attainable. Therefore, we have (\ref{branch3}).

If $ I_{\text{max}} (\Psi'; N) >    I_{\text{max}} (\Psi''; N-2)   $, it is readily seen that to make the equality in the third line of (\ref{fg}) satisfied, we must have $\parallel \tilde{\phi}_1 \parallel^2= \parallel \tilde{\phi}_2 \parallel^2 =1$, and accordingly $s_1 =s_2= t_1 =0$. Thus, $|\langle S | \Psi \rangle |^2 = |\langle S | \Psi'\rangle |^2$ and hence $S= S'$, where $S'$ is the optimal Slater approximation of $\Psi'$. On the other hand, if  $ I_{\text{max}} (\Psi'; N) <    I_{\text{max}} (\Psi''; N-2)   $,
we must have $|s_1|= |t_1|=1$ and accordingly $\parallel \tilde{\phi}_1 \parallel^2= \parallel \tilde{\phi}_2 \parallel^2 =s_2 =0 $, and thus $S= f\wedge g \wedge S''$, where $S''$ is the optimal Slater approximation of $\Psi''$.
\end{proof}
Lemma \ref{branch2} means that, if in a wave function, two orbitals are occupied or unoccupied only simultaneously, then the wave function can be broken down into two parts---in one the two orbitals are both occupied, while in the other the two are both unoccupied. The two parts can then be studied separately as they do not interfere with each other.

\subsection{Simple applications}\label{examples}

The lemmas above, though simple, are very useful. In the following, we show some examples, for which the lemmas allow us to find $I_\text{max}$ without effort.

\textit{Example 1}. Consider the Greenberger-Horne-Zeilinger (GHZ) -type state ($N\geq 2$)
\begin{eqnarray}\label{cat}
\Psi = a |12 \ldots N \rangle + b | N+1, N+2, \ldots, 2 N \rangle,
\end{eqnarray}
with $|a|^2 + |b|^2 =1$ and $|a|> |b|$. Here and henceforth, by $|ij,\ldots k \rangle $ we mean $\phi_i \wedge \phi_j  \wedge \ldots \wedge \phi_k $. By Lemma \ref{branch2}, we know immediately that the optimal Slater approximation is $|12\ldots N \rangle $, and $I_\text{max}(\Psi;N)= |a|^2$. By Lemma \ref{least}, we know for $M = 2N-1$, we can drop the $2N$th orbital, and $I_\text{max}(\Psi;2N-1)= |a|^2$. Apparently, $I_\text{max}(\Psi;M)$ is a monotonically increasing function of $M$. Therefore, $I_{\text{max}}(\Psi;M )= |a|^2$ for $N\leq M \leq 2N-1$.

\textit{Example 2}. Suppose ($N\geq 3$)
\begin{eqnarray}
\Psi = a |12 \ldots N \rangle + b | 1, N+1, \ldots, 2 N-1 \rangle,
\end{eqnarray}
with $|a|^2 + |b|^2 =1$ and $|a|>|b|$. Now the first orbital is always occupied and by Lemma \ref{factor2}, the problem can be reduced to $\Psi' = a |2 \ldots N \rangle + b | N+1, \ldots, 2 N-1 \rangle $, which is in the form of the wave function in Example 1. We have thus $I_\text{max}(\Psi;N)= |a|^2$ and the optimal Slater approximation is $|12 \ldots N \rangle  $.

\textit{Example 3}. Consider the three-fermion wave function
\begin{eqnarray}
\Psi = a |123\rangle + b|345\rangle + c|567 \rangle .
\end{eqnarray}
Taking $f = |1\rangle $ and $g = |2 \rangle$, the wave function is first decomposed as the sum of $a|123\rangle $ and $b|345\rangle + c|567 \rangle $. For the latter, taking
$f = |3\rangle $ and $g = |4 \rangle$, it is further decomposed as the sum of $b|345\rangle $ and $c | 567 \rangle $. By repeated application of Lemma \ref{branch2}, we readily see that $I_\text{max}(\Psi;3)= \max \{ |a|^2,|b|^2,|c|^2 \}$.

Similarly, for the four-fermion wave function
\begin{eqnarray}
\Psi = a |1234\rangle + b|4567\rangle + c|7891 \rangle ,
\end{eqnarray}
$I_\text{max}(\Psi;4)= \max \{ |a|^2,|b|^2,|c|^2 \}$.

\textit{Example 4}. Consider the two-fermion wave function
\begin{eqnarray}
\Psi = a |12 \rangle + b | 23 \rangle + c | 31 \rangle ,
\end{eqnarray}
with $|a|^2 + |b|^2 + |c|^2 =1$.
Since it is a wave function with two fermions in three orbitals, by Lemma \ref{hole}, $I_\text{max}(\Psi; 2)= 1$.

%%%%%%%%%%%%%%%%%%%%%%%%%%%%%%%%%%%%%%%%%%%%%%%%%%%%%%%%%%%%%%%%%%%%%%%%%%%%%%%%%%%%%%%%%%%%%%%%%%%%%%%%%%%
\section{The simplest nontrivial case: three-in-six}\label{threeinsix}

So far, we have been dealing with the general case. Now we turn to the specific case of three fermions in six orbitals, i.e., $(N,d)= (3,6)$. Like in the ``$N$-representability'' problem, this case is the simplest nontrivial case. The reason is as follows. The $N=2$ case has been solved completely already \cite{zjm}---just take the $M$ most occupied natural orbitals. For $N=3$, the $d=4$ is trivial by Lemma \ref{hole}---the wave function is always a Slater determinant. The $d=5$ case is not difficult as well, since by the particle-hole transform, it can be reduced to the $(N,d)=(2,5)$ case. Or mathematically, there is a Hodge dual between $\wedge^3 \mathcal{H}_5$ and $\wedge^2 \mathcal{H}_5$.

We have the following proposition:
\begin{Proposition}\label{ansatz}
For an arbitrary normalized wave function $\Psi \in \wedge^3 \mathcal{H}_6$, its optimal Slater approximation $S$ is of the form
\begin{eqnarray}\label{alexansatz}
(\alpha_1 \phi_1 + \beta_1 \phi_6)\wedge  (\alpha_2 \phi_2 + \beta_2 \phi_5) \wedge  (\alpha_3 \phi_3 + \beta_3 \phi_4),\quad
\end{eqnarray}
where $\phi_i$ is the natural orbital corresponding to the $i$th largest eigenvalue
of the one-particle reduced density matrix of $\Psi$. The parameters $\alpha_i$ and $\beta_j$ satisfy the condition $|\alpha_i |^2 + |\beta_i |^2=1$.
\end{Proposition}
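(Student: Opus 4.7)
The plan is to leverage the canonical form of $\Psi$ established earlier in the section. In the natural-orbital basis, the Borland--Dennis relations reduce $\Psi$ to
\[
\Psi = c_1|1,2,3\rangle + c_2|1,4,5\rangle + c_3|2,4,6\rangle + c_4|3,5,6\rangle ,
\]
in which each of the four terms uses exactly one orbital from each of the three complementary pairs $(\phi_1,\phi_6)$, $(\phi_2,\phi_5)$, $(\phi_3,\phi_4)$, and no triple contains two members of the same pair.

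I would then parametrize the optimal Slater $S=u\wedge v\wedge w$ by the three-dimensional subspace $V=\mathrm{span}(u,v,w)\subset\mathcal{H}_6$, since $S$ depends on $u,v,w$ only through $V$ up to an overall scalar. A first-order variation of one orbital of $S$ in $V^\perp$, combined with the phase freedom of the overall overlap, yields the stationarity condition $\iota_{v\wedge w}\Psi\in V$, and likewise with $v,w$ replaced by the other two pairs of basis vectors of $V$. By bilinearity this is equivalent to the single closure property $\iota_{\wedge^2 V}\Psi\subset V$.

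Next I would tabulate the elementary contractions $\iota_{\phi_i\wedge\phi_j}\Psi$. A direct computation from the canonical expansion gives a rigid pattern: the result vanishes exactly when $\phi_i,\phi_j$ lie in the same Borland--Dennis pair, and otherwise equals a specific nonzero multiple of a single orbital from the third pair. Consequently, for $\eta_1,\eta_2\in V$ the contraction $\iota_{\eta_1\wedge\eta_2}\Psi$ decomposes pair-by-pair as $\xi_1+\xi_2+\xi_3$ with $\xi_k\in\mathcal{H}_2^{(k)}=\mathrm{span}(\phi_{a_k},\phi_{b_k})$, each $\xi_k$ a specific bilinear expression in the pair-components of $\eta_1,\eta_2$.

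The main obstacle I anticipate is to deduce from the closure condition that $V$ itself must split as $V=L_1\oplus L_2\oplus L_3$ with $L_k=V\cap\mathcal{H}_2^{(k)}$ a line in $\mathcal{H}_2^{(k)}$. My plan is to show that the closure equations, viewed as polynomial constraints on the Pl\"ucker coordinates of $V$, admit no solutions outside this pair-aligned family for generic $c_1,c_2,c_3,c_4$; the degenerate cases in which some $c_i$ vanishes or some $\lambda_i$ equals $\lambda_{i+1}$ can be handled either as limits or directly via Lemmas \ref{factor2} and \ref{branch2}. Once pair-alignment is established, each $L_k$ is spanned by a unit vector of the form $\alpha_k\phi_{a_k}+\beta_k\phi_{b_k}$ with $|\alpha_k|^2+|\beta_k|^2=1$, which reproduces the ansatz (\ref{alexansatz}); one then verifies that the pair-aligned critical set contains the global maximum.
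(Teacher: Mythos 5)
Your first-order stationarity condition $\iota_{\wedge^2 V}\Psi\subset V$ is correct and is in fact the same observation the paper uses (it is exactly the statement that no term $f_i\wedge f_j\wedge g_k$ appears in the expansion (\ref{expan1})). But the proposal has two genuine gaps. First, your starting point is not available: a four-term expansion of $\Psi$ in its \emph{natural-orbital} basis is neither ``established earlier in the section'' nor a consequence of the Borland--Dennis eigenvalue relations (\ref{borlandanddennis}a). The paper explicitly notes that the generic expansion of $\Psi$ in its natural orbitals contains \emph{eight} one-from-each-pair determinants, and the canonical form it does prove, Eq.~(\ref{canonical}), has five terms in a basis adapted to the optimal Slater determinant --- establishing that form is the content of the proof, so it cannot be assumed as input. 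This error propagates: with the correct eight-term expansion, the contraction $\iota_{\phi_i\wedge\phi_j}\Psi$ for $i,j$ in distinct Borland--Dennis pairs generically lands in the full two-dimensional span of the third pair, not on ``a single orbital,'' so the ``rigid pattern'' you rely on does not hold.

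Second, the decisive step --- that the closure equations force $V=L_1\oplus L_2\oplus L_3$ with each $L_k$ a line inside one pair's two-dimensional subspace --- is only announced as a plan (``show that the closure equations \ldots admit no solutions outside this pair-aligned family''), with no argument supplied; this is precisely where all the difficulty sits, and the appeal to genericity plus ``limits'' for the degenerate cases is delicate, since the paper's GHZ example shows that in the degenerate case the natural orbitals cannot be chosen freely. The paper circumvents this entirely by reversing the logical direction: it starts from the optimal subspace $V$, uses stationarity to eliminate the $f_i\wedge f_j\wedge g_k$ terms, applies a singular-value decomposition to the remaining $3\times 3$ coefficient matrix to reach the canonical form (\ref{canonical}) with $e_1\wedge e_2\wedge e_3$ the optimizer, and only then computes the one-particle reduced density matrix, finding it block diagonal with three trace-one $2\times 2$ blocks; in the non-degenerate case this forces each $e_i$ into the span of a Borland--Dennis pair of natural orbitals, which is the assertion of the Proposition. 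To salvage your route you would need to either prove the pair-alignment of $V$ directly from the closure equations with the eight-term expansion, or adopt the paper's order of argument.
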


This proposition is based on the canonical form of the wave function in (\ref{canonical}) below, which was first proven by Chen \textit{et al.} in \cite{lin2015,lin3in6}. Their proof involves sophisticated invariant theory. Here we provide a direct and elementary proof.

\begin{proof}
Let $S = f_1 \wedge f_2 \wedge f_3$ be the optimal Slater approximation of $\Psi$, with $\{ f_1, f_2, f_3\}$ being three orthonormal orbitals. These orbitals can be extended into a full orthonormal basis $\{ f_1, f_2, f_3, g_1, g_2, g_3 \}$ of $\mathcal{H}_6$.

Now let us expand $\Psi$ in terms of the 20 Slater determinants constructed out of $f_i$ and $g_j$. It should be in the form of
\begin{eqnarray}\label{expan1}
\Psi &=& a f_1 \wedge f_2 \wedge f_3 + c g_1 \wedge g_2 \wedge g_3 \nonumber \\
&& + f_1 \wedge (b_{11} g_2\wedge g_3+ b_{12} g_3\wedge g_1 + b_{13} g_1\wedge g_2) \nonumber \\
& & + f_2 \wedge (b_{21} g_2\wedge g_3+ b_{22} g_3\wedge g_1 + b_{23} g_1\wedge g_2) \nonumber \\
& & + f_3 \wedge (b_{31} g_2\wedge g_3+ b_{32} g_3\wedge g_1 + b_{33} g_1\wedge g_2).\quad
\end{eqnarray}
The point is that terms $f_i \wedge f_j \wedge g_k $ do not appear.
The reason is that if some $f_i \wedge f_j \wedge g_k $ has a nonzero coefficient, it would mean a contradiction with the assumption that $f_1 \wedge f_2 \wedge f_3$ is the optimal Slater approximation of $\Psi$. For example, if $f_1 \wedge f_2 \wedge g_1$ has a nonzero coefficient $d$, then $a f_1 \wedge f_2 \wedge f_3$ and $df_1 \wedge f_2 \wedge g_1 $ can be combined into  $f_1 \wedge f_2 \wedge (a f_3 + d g_1 )$, and the Slater wave function $f_1 \wedge f_2 \wedge \tilde{f}_3$,
with $\tilde{f}_3 = (af_3 + d g_1 )/\sqrt{|a|^2+ |d|^2}$, has a larger overlap with $\Psi$ than $f_1 \wedge f_2 \wedge f_3 $.

Next let us simplify the last three terms in (\ref{expan1}). Define
\begin{eqnarray}
(G_1, G_2, G_3) = (g_2 \wedge g_3, g_3 \wedge g_1, g_1 \wedge g_2).
\end{eqnarray}
They are orthonormal. The sum of the last three terms in (\ref{expan1}) is then in the form
\begin{eqnarray}
\Psi' = \sum_{i,j=1}^3 b_{ij} f_i \wedge G_j.
\end{eqnarray}
Now make the transforms
\begin{eqnarray}
f_i = \sum_{j=1}^3 U_{ik }^* e_k , \quad G_j = \sum_{m=1}^3 V_{jm} H_m,
\end{eqnarray}
where $U$ and $V $ are unitary matrices fulfilling the singular value decomposition of the matrix $b$, i.e., $U^\dagger b V = \Lambda $, where $\Lambda$ is diagonal.
Under these transforms,
\begin{eqnarray}\label{psiprime}
\Psi' =  \sum_{k=1}^3 \Lambda_{kk } e_k \wedge H_k .
\end{eqnarray}
Here $H_k$, as a linear combination of $G_j$, by Lemma \ref{hole}, can be written in the form
\begin{eqnarray}
(H_1, H_2, H_3 )= (H_{12}\wedge H_{13},H_{23}\wedge H_{21},H_{31}\wedge H_{32}),\quad
\end{eqnarray}
with $H_{ij}\in  W \equiv \Span\{ g_1, g_2, g_3 \}$ and normalized. Moreover,
\begin{eqnarray}
\langle H_{12}| H_{13}\rangle  =\langle H_{21}| H_{23}\rangle =\langle H_{31}| H_{32}\rangle =0.
\end{eqnarray}
We can expand $\{ H_{12},H_{13}\}$, $\{ H_{21},H_{23}\}$, and $\{ H_{31},H_{32}\}$ into a complete orthonormal basis of $W $,
\begin{eqnarray}
W & =& \Span \{ H_{12},H_{13}, h_1 \} \nonumber \\
&=& \Span \{ H_{21},H_{23}, h_2 \} \nonumber \\
&=& \Span \{ H_{31},H_{32}, h_3 \} .
\end{eqnarray}
From the fact that
\begin{eqnarray}
\langle H_1 | H_2 \rangle = \langle H_2 | H_3\rangle =\langle H_1 | H_3 \rangle =0,
\end{eqnarray}
it is easy to deduce that
\begin{eqnarray}
\langle h_1 | h_2 \rangle = \langle h_2 | h_3\rangle =\langle h_1 | h_3 \rangle =0.
\end{eqnarray}
That is, $\{ h_1 , h_2, h_3\}$ is an orthonormal basis of $W$. Thus,
%\begin{subequations}\label{h1h2h3}
%\begin{eqnarray}
%H_1 \propto h_2\wedge h_3 , \; H_2 \propto h_3 \wedge h_1, \; H_3 \propto h_1 \wedge h_2, \quad \; \\
%g_1 \wedge g_2 \wedge g_3 \propto h_1 \wedge h_2 \wedge h_3.\quad \quad \quad \quad \quad
%\end{eqnarray}
%\end{subequations}
\begin{eqnarray}
H_1 \propto h_2\wedge h_3 , \; H_2 \propto h_3 \wedge h_1, \; H_3 \propto h_1 \wedge h_2, \quad \;  \label{h1h2h31}\\
g_1 \wedge g_2 \wedge g_3 \propto h_1 \wedge h_2 \wedge h_3.\quad \quad \quad \quad \quad  \label{h1h2h32}
\end{eqnarray}
Similarly, since $\Span \{ e_1, e_2, e_3 \} = \Span \{ f_1, f_2, f_3 \}$,
\begin{eqnarray}\label{f1f2f3}
f_1 \wedge f_2 \wedge f_3 \propto e_1 \wedge e_2 \wedge e_3.
\end{eqnarray}
Finally, substituting (\ref{psiprime}), (\ref{h1h2h31})-(\ref{f1f2f3}) into (\ref{expan1}), we have the canonical form of $\Psi$,
\begin{eqnarray}\label{canonical}
\Psi &=& A e_1 \wedge e_2 \wedge e_3 + B e_1 \wedge h_2 \wedge h_3 + C e_2 \wedge h_3 \wedge h_1  \nonumber \\
& & + D e_3 \wedge h_1 \wedge h_2 + E h_1 \wedge h_2 \wedge h_3,
\end{eqnarray}
with $\{ e_1, e_2, e_3, h_1, h_2, h_3 \}$ being an orthonormal basis of $\mathcal{H}_6$, $ e_1 \wedge e_2 \wedge e_3 $ the optimal Slater approximation of $\Psi$ and $|A|^2+ |B|^2+ |C|^2+ |D|^2+|E|^2=1$.

Now, it is readily seen that, the one-particle reduced density matrix $\rho$ in the basis of $\{ e_1, h_1, e_2,h_2, e_3, h_3 \}$ is a block matrix consisting of three $2\times 2$ matrices. Specifically, the $2\times 2$ block corresponding to $\{ e_1, h_1 \}$ is
\begin{eqnarray}
\left( \begin{array}{cc }
|A|^2 + |B|^2  &  E^* B \\
E B^* &  |C|^2+ |D|^2+|E|^2
\end{array}    \right),
\end{eqnarray}
the $2\times 2$ block corresponding to $\{ e_2, h_2 \}$ is
\begin{eqnarray}
\left( \begin{array}{cc }
|A|^2 + |C|^2  &  E^* C \\
E C^* &  |B|^2+ |D|^2+|E|^2
\end{array}    \right),
\end{eqnarray}
and the $2\times 2$ block corresponding to $ \{ e_3, h_3 \}$ is
\begin{eqnarray}
\left( \begin{array}{cc }
|A|^2 + |D|^2  &  E^* D \\
E D^* &  |B|^2+ |C|^2+|E|^2
\end{array}    \right).
\end{eqnarray}
An important fact is that the three matrices all have trace of unity, which means that the sum of the two eigenvalues is unity. Therefore, we have
\begin{eqnarray}
\lambda_1 + \lambda_6  = \lambda_2 + \lambda_5 =\lambda_3 + \lambda_4 =1.
\end{eqnarray}
We have thus proven that Borland-Dennis's discovery (\ref{borlandanddennis}a) is necessary.

In the generic case, there is no degeneracy between the eigenvalues and
 therefore $\{ \phi_1 ,\phi_6 \} $ must appear in the same block, and so must $\{ \phi_2, \phi_5 \}$ and $\{ \phi_3, \phi_4 \}$.
This means, for some permutation $P $ in the group $ S_3$,
\begin{subequations}
\begin{eqnarray}
e_{P1} &=& \alpha_1 \phi_1 + \beta_1 \phi_6, \\
e_{P2} &=& \alpha_2 \phi_2 + \beta_2 \phi_5, \\
e_{P3} &=& \alpha_3 \phi_3 + \beta_3 \phi_4,
%(e_{P1},e_{P2},e_{P3})= (\alpha_1 \phi_1 + \beta_1 \phi_6,\alpha_2 \phi_2 + \beta_2 \phi_5,\alpha_3 \phi_3 + \beta_3 \phi_4)
\end{eqnarray}
\end{subequations}
with $|\alpha_i |^2 + |\beta_i |^2=1$. We thus have (\ref{alexansatz}).

In the case of degeneracy, the optimal Slater approximation can still be written in the form of (\ref{alexansatz}), but the natural orbitals can no longer be chosen in an arbitrary way. For example, for the GHZ state
\begin{eqnarray}
\Psi_{\text{GHZ}} =\frac{1}{\sqrt{2}}(\phi_1 \wedge \phi_2 \wedge \phi_3 + \phi_4 \wedge \phi_5 \wedge \phi_6),
\end{eqnarray}
its optimal Slater determinant is either $\phi_1 \wedge \phi_2 \wedge \phi_3$ or $\phi_4 \wedge \phi_5 \wedge \phi_6 $ by Lemma \ref{branch2}. However, since the one-particle reduced density matrix is simply a constant matrix with $1/2$ on the diagonal, an arbitrary vector is an natural orbital. It is obvious that we cannot choose the natural orbitals randomly.
\end{proof}

Here some remarks are worth mentioning. Equation (\ref{alexansatz}), initially as an ansatz, was motivated by two observations of Borland and Dennis. The first one is Eq.~(\ref{borlandanddennis}a). In the expansion of (\ref{alexansatz}), in each term, one and only one of $\{ \phi_1, \phi_6 \}$ appear, which means their occupations sum up to unity. This is in alignment with (\ref{borlandanddennis}a). The second one is that, in the expansion of (\ref{alexansatz}), we have eight terms, which are exactly the nonzero terms appearing in the expansion of a generic function $\Psi$
in terms of its natural orbitals \cite{borland1, borland2}. These two facts make the ansatz (\ref{alexansatz}) plausible and promising.

Proposition \ref{ansatz} also clarifies the role of the natural orbitals in the construction of the optimal Slater approximation of an arbitrary wave function. As pointed out in Ref.~\cite{zjm}, generally the Slater wave function constructed out of the $N$ most occupied natural orbitals shows no definite advantage over a Slater wave function built of $N$ randomly generated orbitals. Actually, in some cases---for example, the state in Eq.~(\ref{patrik}) below---the former can be orthogonal to the target wave function. The usefulness of natural orbitals for generating an efficient expansion of the target wave function was also seriously questioned in Ref.~\cite{holland}. Now, in the special case of $\wedge^3 \mathcal{H}_6$,   Proposition \ref{ansatz} tells us that the natural orbitals do play an important role in the optimal Slater approximation, but in a subtle way.

Although Eq.~(\ref{alexansatz}) is beautiful and general, it is not convenient as a variational wave function, as it involves the natural orbitals. To apply it for finding the optimal Slater approximation, one must first solve the natural orbitals, which is not necessarily an easy task, and then expand the wave function in terms of the natural orbitals, which is again tedious if it is feasible at all.

However, for some special kind of states, Eq.~(\ref{alexansatz}) can be expressed in another form without explicitly referring to the natural orbitals. Suppose we have a wave function in the form of
\begin{eqnarray}\label{eight}
\Psi &=& A_{0} \varphi_1 \wedge \varphi_2 \wedge \varphi_3 + A_{1} \varphi_1 \wedge \varphi_2 \wedge \varphi_4 + A_{2} \varphi_1 \wedge \varphi_5 \wedge \varphi_3  \nonumber \\
& + &  A_{3} \varphi_1 \wedge \varphi_5 \wedge \varphi_4 + A_{4} \varphi_6 \wedge \varphi_2 \wedge \varphi_3 + A_{5} \varphi_6 \wedge \varphi_2 \wedge \varphi_4  \nonumber \\
& +& A_{6} \varphi_6 \wedge \varphi_5 \wedge \varphi_3 + A_{7} \varphi_6 \wedge \varphi_5 \wedge \varphi_4.
\end{eqnarray}
Here $\{ \varphi_i | 1\leq i \leq 6\}$ is an orthonormal basis of $\mathcal{H}_6$ and $\sum_{j=0}^7 |A_j|^2=1$. We note that the six basis vectors are divided into three pairs, i.e., $\{\varphi_1, \varphi_6 \}$, $\{\varphi_2, \varphi_5 \}$, and $\{\varphi_3, \varphi_4 \}$. The eight Slater determinants are constructed by choosing one out of each pair. We also note that the canonical form (\ref{canonical}) is a special case of (\ref{eight}). Now, like above, it is readily seen that the one-particle density matrix $\rho$ has a block diagonal form with respect to the basis  $\{ \varphi_i | 1\leq i \leq 6\}$. More specifically, there are three $2\times 2$ blocks corresponding to the three pairs, and each block has trace unity. In a generic case without degeneracy between the eigenvalues, we then must have
\begin{subequations}
\begin{eqnarray}
\Span \{ \varphi_1, \varphi_6 \} &=& \Span \{\phi_{P1}, \phi_{7-P1} \}, \\
\Span \{ \varphi_2, \varphi_5 \} &=& \Span \{\phi_{P2}, \phi_{7-P2} \}, \\
\Span \{ \varphi_3, \varphi_4 \} &=& \Span \{\phi_{P3}, \phi_{7-P3} \},
\end{eqnarray}
\end{subequations}
for some permutation $P \in S_3$. This means the optimal Slater approximation can also be expressed in terms of the orbitals $\varphi_i$ as
\begin{eqnarray}\label{alexansatz2}
(\alpha_1 \varphi_1 + \beta_1 \varphi_6)\wedge  (\alpha_2 \varphi_2 + \beta_2 \varphi_5) \wedge  (\alpha_3 \varphi_3 + \beta_3 \varphi_4).\quad
\end{eqnarray}
The problem is then to maximize the magnitude of
\begin{eqnarray}
\langle \Psi | S \rangle &=& A_0^* \alpha_1 \alpha_2 \alpha_3 + A_1^* \alpha_1 \alpha_2 \beta_3 - A_2^* \alpha_1 \beta_2 \alpha_3 \nonumber \\
& & - A_3^* \alpha_1 \beta_2 \beta_3 + A_4^* \beta_1 \alpha_2 \alpha_3 + A_5^* \beta_1 \alpha_2 \beta_3 \nonumber \\
& & - A_6^* \beta_1 \beta_2 \alpha_3 - A_7^* \beta_1 \beta_2 \beta_3,
\end{eqnarray}
under the condition $|\alpha_i|^2+|\beta_i|^2 =1 $.

%%%%%%%%%%%%%%%%%%%%%%%%%%%%%%%%%%%%%%%%%%%%%%%%%%%%%%%%%%%%%%%%%%%%%%%%%%%%%%%%%%%%%%%%%%%%%
\section{Application of Hadamard inequality }\label{hadamardsection}

Neither (\ref{alexansatz}) nor (\ref{alexansatz2}) are useful in case of degeneracy,
because it is unclear what specific natural orbitals one should take. For example, for the state
\begin{eqnarray}\label{patrik}
\Psi &=& \frac{1}{\sqrt{3}} (\varphi_1 \wedge \varphi_2 \wedge \varphi_4 + \varphi_1 \wedge \varphi_5 \wedge \varphi_3   \nonumber \\
& & \quad \quad
 +  \varphi_6 \wedge \varphi_2 \wedge \varphi_3  ),\quad
\end{eqnarray}
$\{\varphi_1,\varphi_2, \varphi_3 \}$ are degenerate with occupation of $2/3$, and $\{\varphi_4,\varphi_5, \varphi_6 \}$ are degenerate with occupation of $1/3$. If one takes the ansatz as
\begin{eqnarray}\label{alexansatz3}
(\alpha_1 \varphi_1 + \beta_1 \varphi_4)\wedge  (\alpha_2 \varphi_2 + \beta_2 \varphi_6) \wedge  (\alpha_3 \varphi_3 + \beta_3 \varphi_5),\quad
\end{eqnarray}
one always gets $\langle S | \Psi \rangle = 0 $.

There is a method based on the Hadamard inequality \cite{horn} to determine $I_\text{max}$ and construct the optimal Slater approximation of $\Psi$. Let $S= \phi_1 \wedge \phi_2 \wedge \phi_3$ be the optimal Slater approximation, with the orthonormal orbitals $\phi_i $ related to  $\varphi_j$ by a $3\times 6 $ matrix $M$, i.e.,
\begin{eqnarray}
\phi_i = \sum_{j=1}^6 M_{ij} \varphi_j.
\end{eqnarray}
Apparently, $M$ should satisfy the condition $MM^\dagger = I_{3\times 3 }$, with $ I_{3\times 3 } $ being the $3\times 3$ identity matrix. We have
\begin{eqnarray}
\langle \Psi | S \rangle = \frac{1}{\sqrt{3}}\left( \det M_{[124]} + \det M_{[623]}+ \det M_{[153]} \right),
\end{eqnarray}
where $M_{[ijk]}$ denotes the $3\times 3$ matrix composed of column $i$, $j$, $k$ of $M$.
Now by the Hadamard inequality for the determinant of a matrix, we have
\begin{eqnarray}\label{hada1}
|\langle \Psi | S \rangle | &\leq & \frac{1}{\sqrt{3}}\left( \left| \det M_{[124]} \right | + \left| \det M_{[623]} \right |+ \left | \det M_{[153]} \right | \right)\nonumber \\
&\leq &  \frac{1}{\sqrt{3}}\left( a_1 a_2 a_4 + a_2 a_3 a_6 + a_1 a_3 a_5  \right),
\end{eqnarray}
where $a_i $ is the norm of the $i$th column of $M$. They must satisfy the conditions, $0\leq a_i \leq 1$ and $\sum_{i=1}^6 a_i^2 = 3$. Now by using the Cauchy-Schwarz inequality and the arithmetic mean-geometric mean inequality, we have
\begin{eqnarray}
& & a_1 a_2 a_4 + a_2 a_3 a_6 + a_1 a_3 a_5   \nonumber \\
&\leq & \sqrt{a_1^2 a_2^2 + a_2^2 a_3^2 + a_1^2 a_3^2} \sqrt{a_4^2 + a_5^2 + a_6^2} \nonumber \\
&\leq & \sqrt{\frac{1}{3}( a_1^2 + a_2^2 + a_3^2)^2(3- a_1^2 - a_2^2 - a_3^2)} \nonumber \\
&= & \sqrt{\frac{1}{6} ( a_1^2 + a_2^2 + a_3^2)^2(6- 2a_1^2 -2 a_2^2 - 2a_3^2)}\nonumber \\
& \leq&  \frac{2}{\sqrt{3}}. \quad \;
\end{eqnarray}
The equalities are achieved when and only when $a_1= a_2 = a_3 = \sqrt{2/3}$ and $a_4 = a_5 = a_6 = \sqrt{1/3}$. By (\ref{hada1}), we thus get $|\langle \Psi | S \rangle | \leq 2/3$. Now fortunately, the equalities in (\ref{hada1}) can all be satisfied if one take
\begin{eqnarray}
M = \left(\begin{array}{cccccc}
\sqrt{\frac{2}{3}} & 0 & 0 & 0 & 0 & \sqrt{\frac{1}{3}}e^{i\theta} \\
0 & \sqrt{\frac{2}{3}} & 0 & 0   & \sqrt{\frac{1}{3}}e^{i\theta} & 0  \\
0 & 0 & \sqrt{\frac{2}{3}} &  \sqrt{\frac{1}{3}}e^{i\theta} & 0 & 0
\end{array} \right),\quad
\end{eqnarray}
with $\theta\in \mathbb{R}$ being arbitrary.
Therefore, we have
\begin{eqnarray}
I_{\text{max}} (\Psi; 3 )= \frac{4}{9},
\end{eqnarray}
and the optimal Slater approximation is
\begin{eqnarray}\label{alexansatz4}
(\alpha \varphi_1 + \beta \varphi_6)\wedge  (\alpha \varphi_2 + \beta \varphi_5) \wedge  (\alpha \varphi_3 + \beta \varphi_4),\quad
\end{eqnarray}
with $\alpha = \sqrt{2/3}$ and $\beta = \sqrt{1/3} e^{i\theta }$. We note that as proven by Chen \textit{et al.} \cite{lin3in6}, $4/9$ is the minimum value achievable by $I_\text{max}$ for three fermions in six orbitals.

Similarly, by using the Hadamard inequality, we can show that for
\begin{eqnarray}
\Psi &=& \frac{1}{2} \big[ \varphi_1 \wedge \varphi_2 \wedge \varphi_3 + \varphi_1 \wedge \varphi_5 \wedge \varphi_4  \nonumber \\
& & \quad \quad + \varphi_6 \wedge \varphi_5 \wedge \varphi_3 + \varphi_6 \wedge \varphi_2 \wedge \varphi_4 \big ],
\end{eqnarray}
$I_\text{max}(\Psi;3)= 1/2$, and the optimal Slater approximation is in the form of (\ref{alexansatz4}), with $\alpha =\beta =1/ \sqrt{2} $.
For the cyclic state
\begin{eqnarray}
\Psi &=& \frac{1}{\sqrt{6}} [ \varphi_1 \wedge \varphi_2 \wedge \varphi_3 + \varphi_2 \wedge \varphi_3 \wedge \varphi_4 + \varphi_3 \wedge \varphi_4 \wedge \varphi_5 \nonumber \\
& &+ \varphi_4 \wedge \varphi_5 \wedge \varphi_6 + \varphi_5 \wedge \varphi_6 \wedge \varphi_1 + \varphi_6 \wedge \varphi_1 \wedge \varphi_2 ],\quad \;
\end{eqnarray}
$I_\text{max}(\Psi;3)= 3/4$, and the optimal Slater approximation is
\begin{eqnarray}\label{alexansatz4}
(\alpha \varphi_1 + \beta \varphi_4)\wedge  (\alpha \varphi_2 + \beta \varphi_5) \wedge  (\alpha \varphi_3 + \beta \varphi_6),\quad
\end{eqnarray}
with $\alpha = \beta =1/ \sqrt{2} $.

The Hadamard inequality can actually be employed to solve Examples 1, 2, and 3 in Sec.~\ref{examples} too. For example, consider the GHZ-type state in (\ref{cat}) with $|a|>|b|$. Let
\begin{eqnarray}
\phi_i = \sum_{j=1}^{2N} M_{ij} \varphi_j , \quad 1\leq i \leq N,
\end{eqnarray}
be the $N$ orthonormal orbitals supporting the optimal Slater approximation $S$. Let $M_1$ and $M_2$ be the $N\times N$ matrices formed by the first  and the last $N$ columns of $M$, respectively, and let $c_i$ and $d_i$ be the norm of the $i$th row of $M_1$ and $M_2$, respectively. We have $c_i^2 + d_i^2 =1$, and
\begin{eqnarray}
|\langle \Psi | S \rangle | &\leq & |a| c_1 c_2\ldots c_N + |b| d_1 d_2\ldots d_N \nonumber \\
&\leq & |a| (c_1 c_2\ldots c_N  +  d_1 d_2\ldots d_N) \nonumber \\
&\leq & |a| (c_1 c_2 + d_1 d_2 ) \nonumber \\
&\leq & |a| \sqrt{c_1^2 + d_1^2}  \sqrt{c_2^2 + d_2^2} \nonumber \\
& \leq &  |a| .
\end{eqnarray}
The equalities are achieved if and only if $c_1 =c_2 =\ldots  c_N = 1$.

%%%%%%%%%%%%%%%%%%%%%%%%%%%%%%%%%%%%%%%%%%%%%%%%%%%%%%%%%%%%%%%%%%%%%%%%%%%%%%%%%%%%%%%%%%%%%%%%%%%%%%%%%%%%%
\section{Conclusions and outlooks}\label{conclusion}

We have studied analytically the problem of optimally approximating a fermionic wave function by a Slater determinant.
It is a continuation of the work in Ref.~\cite{zjm}, where a numerical algorithm was proposed and the problem in the $N=2$ case was solved.
This problem is of significance in both  physics and mathematics.
Physically, it provides a geometric measure of entanglement between the fermions, with identical particles treated identically.
Mathematically, it is about the geometry of the alternating space $\wedge^N \mathcal{H}_d$ equipped with an inner product.
We found that the geometric structure of $\wedge^N \mathcal{H}_d$ is subtle and interesting,
as illustrated by the lemmas in the general case, and the proposition on the structure of the optimal Slater approximation
in the special case of $(N=3,d=6)$.

Our simple proof of the Borland-Dennis discovery (\ref{borlandanddennis}a) by starting from the optimal Slater approximation
also exemplifies that this notion can play a pivotal role in analyzing the structure of a fermionic wave function.

We consider our results as tentative in this direction. Many problems are still open.
For example, a question of interest is the minimum of $I_\text{max}$ and its scaling behavior.
The significance of this min-max problem lies in that it characterizes the complexity
of wave functions in $\wedge^N \mathcal{H}_d$ as a whole.
For a general value of $d$ and with $N=3$, it is easy to prove that the minimum of $I_\text{max}$ is no less than $2/d^2$,
in contrast to the dimension of $\wedge^3 \mathcal{H}_d$, which scales as $d^3$.
But it is unclear whether the power factor of 2 can be reduced or not.

%%%%%%%%%%%%%%%%%%%%%%%%%%%%%%%%%%%%%%%%%%%%%%%%%%%%%%%%%%%%%%%%%%%%%%%%%%%%%%%%%%%%%%%%%%%%%%%%%%%%%%%%
\section*{Acknowledgments}

We are grateful to Alex D. Gottlieb for his essential input. 
We acknowledge financial support by the Fujian Provincial 
Science Foundation under Grant No. 2016J05004, by the Austrian Ministry of Science, Research and Economy (BM:WFW) 
via its grant for the Wolfgang Pauli Institute and by the 
Austrian Science Foundation (FWF) under Grant No. F41 
(SFB ``VICOM'') as well as under Grant No. W1245 (DK 
``Nonlinear PDEs'').

%%%%%%%%%%%%%%%%%%%%%%%%%%%%%%%%%%%%%%%%%%%%%%%%%%%%%%%%%%%%%%%%%%%%%%%%%%%%%%%%%%%%%%%%%%%%%%%%%%%%%%%%

\end{document}